\newcommand{\ndnname}[1]{{\footnotesize\path{#1}}}
\newcommand{\Adv}{Adv}
\newtheorem{theorem}{Theorem}[section]
\newtheorem{defn}[theorem]{Definition}
\begin{document} 
\title{Elements of Trust in Named-Data Networking}

\numberofauthors{2}
\author{
\alignauthor
Cesar Ghali \hspace*{1.5cm} Gene Tsudik \\
       \affaddr{University of California, Irvine\\
       \{cghali,gene.tsudik\}@uci.edu}
\alignauthor
Ersin Uzun\\
       \affaddr{Palo Alto Research Center \\
       Ersin.Uzun@parc.com}
}

\maketitle

\begin{abstract}
In contrast to today's IP-based host-oriented Internet architecture, 
Information-Centric Networking (ICN) emphasizes content by making it directly 
addressable and routable. Named Data Networking (NDN) architecture is an instance of 
ICN that is being developed as a candidate next-generation Internet architecture. By 
opportunistically caching content within the network (in routers), NDN 
appears to be well-suited for large-scale content distribution and for 
meeting the needs of increasingly mobile and bandwidth-hungry applications 
that dominate today's Internet.

One key feature of NDN is the requirement for each content object to be 
digitally signed by its producer. Thus, NDN should be, in principle, immune 
to distributing fake (aka ``poisoned'') content. However, in practice, this 
poses two challenges for detecting fake content in NDN routers: (1) overhead 
due to signature verification and certificate chain traversal, and (2) lack 
of trust context, i.e., determining which public keys are trusted to verify which 
content. Because of these issues, NDN does not force routers to verify 
content signatures, which makes the architecture susceptible to content poisoning 
attacks.

This paper explores root causes of, and some cures for, content poisoning 
attacks in NDN. In the process, it becomes apparent that meaningful mitigation 
of content poisoning is contingent upon a network-layer trust management architecture, 
elements of which we construct while carefully justifying specific design 
choices. This work represents the initial effort towards comprehensive trust 
management for NDN. 
\end{abstract}

\section{Introduction}
\label{sec:intro}
The Internet usage model has changed considerably over the last two decades. 
Limitations of the current Internet are becoming more pronounced as network 
services and applications become increasingly mobile and data-centric. In 
recent years, a number of research efforts have sprung up aiming to design 
the next-generation Internet architecture. Some are based on the notion of 
Information-Centric Networking (ICN) which emphasizes efficient and scalable 
content distribution. Named Data Networking (NDN)\cite{NDN}, a fork 
from PARC's Content Centric Networking (CCNx) architecture\cite{ccnx}, is one such research effort. 
One of the main tenets of NDN is named content. 
NDN also stipulates in-network content caching, by routers. To secure 
each content, NDN requires it to be cryptographically signed by its producer. 
This way, globally addressable and routable content can be authenticated by anyone, 
which allows NDN to decouple trust in content from trust in entities that store 
and disseminate it. NDN entities that request content are called {\em consumers}. 
A consumer is expected to verify content signatures in order to assert:
\begin{compactitem}
\item {\em Integrity} -- a valid signature (computed over a content hash) 
guarantees that signed content is intact;
\item {\em Origin Authentication} -- since a signature is bound to the public 
key of the signer, anyone can verify whether content originates with its claimed 
producer;
\item {\em Correctness} -- since a signature binds content name to its payload, 
a consumer can securely determine whether delivered content corresponds to what 
was requested;
\end{compactitem}
Although any NDN entity can verify any content signature, NDN routers are 
{\em not required}, to do so. This is not only because of the overhead 
stemming from the actual cryptographic verification of the signature itself. 
There are two other, more important, reasons for not mandating router verification 
of content signatures: 
\begin{compactenum}
\item First, a router must be aware of the specific trust model for each 
content-producing application. Given the wide range of possible applications,
it is very unlikely that they will all use the same trust model.
Some applications will probably use trust hierarchies, while others might adopt
a flat peer-based trust models, or hybrid versions thereof. Furthermore, the set
of NDN applications will change over time. Also, the trust model of a particular 
application might not be static in the long term. 
\item Second, depending on the trust model of an application 
associated with a particular content, a router needs access to  -- and thus 
might need to fetch\footnote{The alternative of carrying the entire collection of 
certificates as part of each content is clearly undesirable.}  -- 
multiple public key certificates or similar structures in order to trust the public key 
that verifies a content signature. 
For example, if an application uses a hierarchical PKI, an
entire root-to-leaf path might have to be traversed and all intermediate 
certificates would need to be separately verified. This would need to 
include ancillary activities for each such certificate, i.e.,  expiration 
and revocation checking. 
\end{compactenum}
These issues greatly complicate network-layer trust management in NDN. One easy 
alternative -- adopted by the current version of NDN -- is to make it 
optional for routers to verify content signatures. Unfortunately, this 
decision leaves NDN vulnerable to content poisoning attacks on router 
caches. To make matters worse, NDN does not provide any definitive mechanism 
for a consumer to request genuine desired content. Instead, a consumer 
that receives fake content can explicitly exclude the latter (by referring 
to its hash) in subsequent requests. This does not guarantee eventual 
success, due to the potentially unbounded number of fake content objects 
sharing the same name.

This undesirable state-of-affairs serves as the main motivation for our 
work. In this paper, we analyze NDN architecture and its susceptibility 
to content poisoning attacks. Next, we postulate some intuitive goals for 
routers to support trust management and content validation. We then
present simple rules that allow NDN parties (consumers, producers and
routers) to mitigate content poisoning, while minimizing trust-related 
complexity for routers. These rules require no changes to the fundamentals 
of the NDN architecture.

Besides being the first effort to address content poisoning and trust 
management in NDN, one contribution of this work is in careful analysis 
and justifications for placement and complexity of various trust mechanisms.

\noindent{\textbf{Disclaimer:}}
it is impossible to predict whether NDN
will ever cross the line between a research prototype and a widely deployed architecture.
NDN, similar to every other candidate Future Internet Architecture, has its benefits and pitfalls.
The purpose of this work is not to advocate for or against NDN.
Instead, we aim to improve NDN security features by 
utilizing techniques 
in~\cite{smetters2009securing,ghodsi2011naming,ghodsi2011information,fayazbakhsh2013less, koponen2007data}. 
Our main goal is to provide NDN routers with a mechanism 
to efficiently and securely verify content in order to mitigate content poisoning attacks.

\noindent{\textbf{Scope:}} as reflected in the title, this paper focuses 
on network-layer trust issues, motivated by the content poisoning problem. 
We do not address other NDN security issues, such as interest flooding attacks 
\cite{compagno2013poseidon, gasti2012ddos, afanasyev2013interest}, cache 
pollution attacks \cite{conti2013lightweight} and routing security~\cite{yi2013case}.

\section{NDN Overview}
\label{sec:ndn}
Unlike IP which focuses on end-points of communication and their names/addresses, 
NDN (\cite{jacobson2009networking,NDN}) emphasizes content and makes it named, 
addressable and routable at the network layer. A content name is composed of 
one or more variable-length components opaque to the network. Component 
boundaries are explicitly delimited by ``\ndnname{/}'' in the usual path-like 
representation. For example, the name of CNN 
home-page content for August 20, 2014 might be: 
\ndnname{/ndn/cnn/news/2014august20/index.htm}. Large content can be split into 
segments with different names, e.g., fragment 37 of Alice's YouTube video 
could be named:\\
\ndnname{/ndn/youtube/alice/video-749.avi/37}. 

NDN communication adheres to the {\em pull} model and content is delivered 
to consumers only following an explicit request. There are two types of 
packets in NDN: interest and content. A consumer requests content by issuing 
an {\em interest} packet. If an entity can ``satisfy'' a given interest, 
it returns a corresponding {\em content} packet. Content delivery must be 
preceded by an interest. If content $C$ with name $n$ is received by a 
router with no pending interest for $n$, $C$ is considered unsolicited and 
is discarded. Name matching in NDN is prefix-based. For example, an interest 
for \ndnname{/ndn/youtube/alice/video-749.avi} can be satisfied by content named 
\ndnname{/ndn/youtube/alice/video-749.avi/37}.\footnote{The reverse 
does not hold, by design.}

NDN content includes several fields. In this paper, we are only interested 
in the following:
\begin{compactitem}
\item \texttt{Signature} -- a public key signature, generated by the content 
producer, covering the entire content, including all explicit components of 
the name and a reference to the public key needed to verify it.
\item \texttt{Name} -- a sequence of explicit name components followed by an 
implicit \texttt{digest} (cryptographic hash) component of the content that is recomputed at every 
hop. This effectively provides each content with a unique name and guarantees 
a match with a name provided in an interest. However, in most cases, the \texttt{digest} 
component is not present in interest packets, since NDN does not provide any 
secure mechanism for a consumer to learn a content hash, prior to requesting it. 
\item \texttt{PublisherPublicKeyDigest} (PPKD) -- an SHA-256 digest of the public 
key needed to verify the content signature.
\item \texttt{Type} -- content type, e.g., data, encrypted content, key, etc.
\item \texttt{Freshness} -- recommended content lifetime (after being cached)
set by the producer.
\item \texttt{KeyLocator} -- a reference to the public key required to verify 
the signature. This field has three options: (1) verification key, (2) certificate 
containing the verification key, or (3) NDN name referencing the content that 
contains the verification key.
\end{compactitem}
Each content producer must have at least one public key, represented 
as a {\em bona fide} named content of \texttt{Type}$=key$, signed by its issuer, 
e.g., a certification authority (CA).\footnote{Recall that NDN is agnostic as 
far as trust management, aiming to accommodate peer-based, hierarchical and 
hybrid PKI approaches.} The naming convention for a public key content object is to contain 
``key'' as its last explicit component, e.g., 
\ndnname{/ndn/russia/moscow-airport/transit/snowden/key}. 

An NDN interest includes the following fields:
\begin{compactitem}
\item \texttt{Name} -- NDN name of requested content.

\item \texttt{Exclude} -- contains information about name components 
that {\bf must not} occur in the name of returned content. This field 
can also be used to exclude certain content by referring to its \texttt{digest}, 
which, as noted above, is included in the content as an implicit last component 
of each content name, or in a separate field.
\item \texttt{PublisherPublicKeyDigest (PPKD)} -- the SHA-256 digest of the 
publisher public key. If this field is present in the {\em interest}, 
a matching content objects must have the same digest in its 
\texttt{PPKD}.
\end{compactitem}
There are three types of NDN entities\footnote{Note that a physical 
entity (a host, in today's parlance) can be both consumer and producer 
of content.}: (1) {\em consumer} -- an entity that issues an interest 
for content, (2) {\em producer} -- an entity that produces and publishes 
(as well as signs) content, and (3) {\em router} -- an entity that 
routes interest packets and forwards corresponding content packets. 
Each entity (not just routers) maintains the following three 
components:
\begin{compactitem}
\item {\em Content Store} (CS) -- cache used for content caching and 
retrieval. From here on, we use the terms {\em CS} and {\em cache} 
interchangeably.
\item {\em Forwarding Interest Base} (FIB) -- routing table of name 
prefixes and corresponding outgoing interfaces used to route interests. 
NDN does not specify or mandate any routing protocol. Forwarding is 
done via longest-prefix match on names.
\item {\em Pending Interest Table} (PIT) -- table of outstanding 
(pending) interests and a set of corresponding incoming and outgoing 
interfaces.
\end{compactitem}
When a router receives an interest for content named $n$ which is 
not in its cache, and there are no pending interests for the same 
name in its PIT, it forwards the interest to the next hop(s), according 
to its FIB. For each forwarded interest, a router stores some amount of 
state information, including the name in the interest and the interface 
on which it arrived. However, if an interest for $n$ arrives while 
there is a pending entry for the same content name in the PIT, the 
router collapses the present interest (and any subsequent interests 
for $n$) storing only the interface upon which it was received. If and 
when content is returned, the router forwards it out on all 
incoming-interest interfaces and flushes the corresponding PIT entry. 
Since no additional information is needed to deliver content, an 
interest does not carry any {\em source address}. (If a content 
fails to arrive before some router-determined expiration time, the 
router can either flush the PIT entry or attempt interest retransmission 
over the same or different interfaces.)

An NDN router's cache size is determined by local resource availability. 
Each router unilaterally determines which content to cache and for 
how long, though lifetime (as mentioned above) can be recommended by the 
producer. Upon receiving an interest, a router first checks its 
cache to see if it already has requested content in its cache. 
Producer-originated content signatures allow 
consumers and routers to authenticate received content, regardless 
of the entity serving it. 

\section{Content Poisoning}
The central objective of NDN is efficient and scalable distribution of 
information. This is facilitated by routers opportunistically caching 
content. Whenever an NDN router receives an interest for a name that 
matches a content in its cache, it satisfies the interest with that 
content. Since routers are not required to verify signatures, the 
delivered content is not guaranteed to be authentic. However, a consumer 
is required to verify signatures of all returned content. A consumer 
is thus assumed to have the necessary application-specific trust context 
to decide which public keys to trust. This allows consumers to reliably 
detect fake content. 

However, NDN offers no means for consumers to ask routers to {\em flush} 
fake content from their caches. The only recourse for a consumer 
that detects fake content is to issue another interest that specifically 
excludes the unwanted content by specifying its hash in the exclusion 
filter field of the new interest. Unfortunately, this explicit exclusion 
does not signify (to routers) bad or poisoned content, as the same feature can
also be used to exclude stale content. Furthermore, even if the 
exclusion technique were to be used strictly for flagging poisoned 
content, the result would be undesirable, for the following reasons:

The entire notion of consumers (i.e, end-systems or hosts) informing 
routers about poisoned content is full of pitfalls. Suppose a consumer 
complains to a router about specific content. If this is done without 
consumer authentication (whether via an interest, e.g, using exclusion, 
or via a separate packet type), the router would have two choices: (1) 
immediately flush referenced content from its cache, or (2) verify the 
content signature and flush content only if verification fails. The 
former (1) is problematic, since it opens the door for anyone to cause 
easy removal of popular content from router caches, which can be 
considered as a type of a denial-of-service attack. Even if this were 
not an issue, there would remain a more general problem: as noted in 
\cite{gasti2012ddos}, the adversary mounting a content poisoning attack 
could continue {\em ad infinitum} to feed new invalid content in 
response to interests that exclude previously consumer-detected invalid 
content. The second option (2) is also problematic, because, besides 
the cost of verifying a signature (which can lead to a denial-of-service 
attack by itself), it brings back the problem of routers having to understand 
potentially complex trust semantics of many diverse content-producing 
applications.

Another possibility is to require consumers to authenticate themselves 
when complaining about poisoned content. This would entail signing the 
interest (or another new packet type) that complains about allegedly bad 
content. One unpleasant privacy consequence is that the signer (consumer) 
would be exposed by the signature, since it would need to be bound to a 
public key, contained in a certificate. (This certificate would have to 
be communicated with each complaint message, along with auxiliary 
information that the router would need to trust the certificate.) 
More generally, signing would violate one of the key elements of NDN 
architecture -- consumer opacity. Recall that producers sign content,
while consumers do not sign interests, or any other messages. 

Another reason why consumer signing of ``complaint'' messages is 
problematic is because it can be abused to mount DoS attacks on routers by 
flooding them with junk complaints and forcing expensive signature 
verification.\footnote{The same attack does not work with 
flooding of routers with junk content since content can not be sent unsolicited 
and a router would only attempt 
signature verification of incoming content for which it has a pending 
interest entry in its PIT.} Note that, even if the router successfully 
authenticates a consumer complaint, this is no guarantee that the accused 
content is fake; in order to be sure, the router would have to verify the 
content signature as well. Moreover, authentication of consumers by 
routers would require identity management and verification systems to 
be in place at the network layer, thus adding significant overhead.

Finally, the preceding discussion applies not only to content {\em cached} 
by routers. Since NDN only recommends, and does not mandate, content 
caching, it is entirely {\em legal} for a router not to cache some, or 
all, content that it forwards. If a router does not cache $C$, then complaining 
about $C$ being fake is clearly useless. 

At this point, it becomes clear that dealing with fake content represents 
a real challenge for NDN. Although some light-weight non-cryptographic 
and partially effective counter-measures have been proposed (e.g., \cite{ghali2014needle}), 
they do not fully address the problem and quickly become ineffective against an active adversary.

\subsection{Zooming In}
Based on the above arguments and recent results simulating content-poisoning attacks 
\cite{ghali2014needle}, we conclude that NDN has a major security problem, since it offers: (1) 
no way to prevent fake content from being delivered to consumers, and (2) no way to 
reliably flush invalid content from router caches. There are two reasons for this 
problem:

%
%
\noindent {\bf 1. Ambiguous interests}: NDN requires each interest to carry the name of 
desired content. However, neither the \texttt{digest} component of the name, nor the 
\texttt{PPKD} is a required field in an interest. In other words, 
an interest for a content name can be satisfied by multiple content objects, 
including those with untrusted or unverifiable signatures. 

\noindent {\bf 2. No unified trust model}: even if routers could verify signatures at line 
speed, NDN does not provide a trust model enforceable at the network layer. Although 
two aforementioned selector fields can be used to communicate content-specific trust 
context to the network layer, NDN has no mechanism for a consumer to securely pre-acquire 
the hash of a given content, or the specific public key that should be used to verify a 
content signature. 

%
%
In order to demonstrate the grave effect content poisoning can have on NDN, we conducted 
a simple experiment using ndnSIM~\cite{afanasyev2012ndnsim} -- a simplified implementation 
of NDN architecture as a NS-3~\cite{ns3} module. Our results verified that content 
poisoning can significantly delay or block customers from accessing valid content. 
Details about the experiment setup and results can be found in Appendix A.

\subsection{Goals}
As a first step in addressing the content poisoning problem, it is necessary to acknowledge 
the obvious, i.e., that {\bf network\hyp{}layer trust management and content poisoning 
are inseparably conjoined}. Since content is the basic unit of network\hyp{}layer 
``currency'' in NDN, trust in content (and not in its producers or consumers) is the 
central issue at the network layer.

Second, trust-related complexity (activities, state maintenance, etc.) must be minimized 
at the network layer. Specifically, as part of validating content, {\bf a 
router should not: fetch public key certificates, perform expiration and revocation 
checking of certificates, maintain its own collection of certificates, or be aware of 
trust semantics of various applications}.\footnote{This is separate from trust 
management  for routing protocols.}

On a related note, we claim that, ideally, {\bf a router should verify at most one signature per content}. 
This upper-bounds the heavier part of content-related cryptographic overhead; the other part 
is computing a content hash. Ideally, a router would not perform any signature 
verification at all. However, as discussed below, this might be possible for some, 
yet not all, content. Also, although verifying a signature given an appropriate 
public key is a mechanical operation, a router would still need to support multiple 
signature algorithms since uniformity across all applications is improbable.

The above discussion implies that NDN entities other than routers, i.e., {\bf producers 
and consumers of content, should bear the brunt of trust management}. 

\section{The Interest-Key Binding Rule}
\label{sec:ikb}
Ghodsi et al.~\cite{ghodsi2011naming} informally argue that, for each content, at least two out 
of three possible bindings (producer-key, name-key, producer-name) must be
present. The third binding is transitively inherited from the other two. Due to the use of
human-readable names in NDN, producer--name binding can be easily inferred.\footnote{If we assume
that names are clear and unambiguous.} Our approach to network-layer trust adheres
to all goals outlined above. It is based on the binding between a name and the 
public key used to verify the content signature. We denote it as the {\em 
Interest-Key Binding (IKB)} rule:

\vspace{0.2cm}
\centerline{\fbox{\parbox{0.968\columnwidth}{\textcolor{blue}{\sf {\bf IKB:}
An interest must reflect the public key of the producer.}}}}
\vspace{0.2cm}

A very similar concept --  {\em self-certifying naming scheme} -- is described 
in \cite{ghodsi2011information}. As discussed in Section~\ref{sec:optimizations},
this concept needs to be adjusted for the NDN context.
Recall that NDN interest format (Section \ref{sec:ndn}) includes an optional field 
\texttt{PPKD} which serves exactly this purpose. Our approach 
makes it mandatory without any substantive changes to the NDN architecture.

An NDN public key is a special type of content in the form of a certificate signed by 
the issuing CA. Each certificate contains a list of all name prefixes that it is 
authorized to sign/verify. The name of the certificate-issuing (content-signing) CA 
and the name of the key contained in a certificate (content) are not required to have 
any specific relationship. This is part and parcel of NDN's philosophy of leaving trust management 
up to the application, e.g., signed content $C$ can be verified with public key $PK$ 
with $C$ and $PK$ having no common prefix. For instance, content containing 
the public key \ndnname{/cnn/usa/web/key} could be issued and verified by the key 
\ndnname{/verisign/key}. Of course, an application is free to impose all kinds of 
restrictions, as long as routers remain oblivious.

\subsection{Implications for Producers and Routers}
\label{sec:imp-pr}
We now examine IKB implications on content producers and routers, respectively.

For content producers, IKB has very few consequences. In fact, it simplifies 
content construction by asking the producer to include the public key itself in the 
\texttt{KeyLocator} field of content. In other words, IKB obviates two other current 
NDN options: (1) referring to a verification key (via the \texttt{KeyLocator} 
field) by its name, or (2) including it in a form of a certificate.

For NDN routers, IKB implications are overwhelmingly positive. 
First, a router needs to perform no fetching, storing or parsing 
of public key certificates, as well as no revocation or expiration checking. 
All such activities are left to consumers.

Upon receiving a content and identifying the PIT entry (corresponding 
to one or more pending interests) a router simply hashes the public key from the 
content \texttt{KeyLocator} field and checks whether it matches the 
\texttt{PPKD} of the PIT entry. In case of a mismatch, 
the content is discarded.\footnote{A slightly simpler alternative is to perform PIT lookup 
each incoming content by using both content name and public 
key hash.} Otherwise, the content signature is verified and (if valid)
the content is forwarded and cached.

The implications would be even more beneficial for producers and routers with the use of
self-certifying content names (SCNs), as discussed in Section \ref{sec:optimizations} below. 
With this optimization, inclusion of key information and signature checking could be 
avoided for most content objects, thus further reducing the communication and computation overhead.

\subsection{Implications for Consumers}
\label{sec:imp-c}
For consumers, IKB does not increase complexity. It actually prompts us to codify 
desired consumer behavior -- something that has been left unspecified in the NDN 
architecture.

The most immediate IKB consequence for a consumer is the need to {\bf obtain and 
validate the producer's public key before issuing an interest for any content 
originated by that producer.} At the first glance, this might appear to be an 
example of the proverbial ``chicken-and-egg'' problem. However, we show below that 
this is not the case.

A consumer that wants to fetch certain content $C$ is doing so as part of some 
NDN application, $APP_C$. We assume that a consumer must have already installed 
this application. $APP_C$ must have a well-defined trust management architecture 
that is handled by its consumer-side software. However, the remaining question is: 
how to bootstrap trust and how to obtain initial public keys?

We consider three non-exclusive alternatives:

\noindent {\bf (1)} One possibility is that $APP_C$ client-side software comes 
with some pre-installed root public key(s), perhaps contained within self-signed 
certificates. Without loss of generality, we assume that there is only one such 
key -- $PK_{root}$. Armed with it, a consumer can request lower-level certificates, 
by issuing an interest referencing the hash of $PK_{root}$ in the 
\texttt{PPKD} field.\footnote{If $APP_C$ comes with several 
root public keys, the consumer would need to issue multiple simultaneous interests referencing 
the hash of each root key in \texttt{PPKD}.}

\noindent {\bf (2)} Alternatively, one could imagine a global Key Name Service (KNS), somewhat akin 
to today's Domain Name Service (DNS). In response to consumer-issued interests 
referencing public key names and/or name prefixes, KNS would reply with signed 
content containing one or more public key certificates (i.e., as embedded content) 
corresponding to requested names.

\noindent {\bf (3)} A similar approach is a global search-based service, i.e., 
something resembling today's Google. A consumer would issue a search query (via an 
interest) to the search engine which would reply with signed content representing a 
set (e.g., one page at a time) of query results. One or more of those results would 
point to content corresponding to the public key certificate of interest to the consumer.

In cases (2) and (3), consumers would still need to somehow securely obtain the root 
public keys for KNS and the search engine, respectively. This can be easily done via (1).

\subsection{Security Arguments}
We now return to the original motivation for this work -- mitigation of content 
poisoning attacks. We need to show that global adherence to the IKB rule leads 
to security against content poisoning.\footnote{Note that, as mentioned in Section 
\ref{sec:intro}, cache pollution attacks are an entirely different matter.}

If we assume that:
\begin{compactenum}
\item Every router abides by the IKB rule and acts as described in Section \ref{sec:imp-pr}.
\item Every consumer abides by the IKB rule and acts as described in Section \ref{sec:imp-c}.
\item The consumer requesting content $C$ is not malicious.
\item Each router $R$ that is one hop away from the consumer is not compromised.
\item The links between a consumer and its adjacent routers are not compromised.
\end{compactenum}
We can briefly argue security by contradiction: Suppose that a consumer receives 
fake content $C$ from $R$. Let $\mathit{Int}$ denote the interest (issued 
earlier by that consumer) that was satisfied by $C$. According to IKB, $\mathit{Int}$ 
must contain the digest of a public key of producer $P$ in its \texttt{PPKD} 
field. Let $PK$ denote this public key. Consequently, $R$ must have made sure that: 
(1) $C$ is signed with a public key $PK'$ with a hash matching \texttt{PPKD} 
of $\mathit{Int}$, meaning that $H(PK')=H(PK)$ and (2) the signature itself is 
correct, i.e., valid. Also, since $R$ is not malicious and all communication between 
$R$ and (also not malicious) the consumer is secure, the only remaining possibility 
is a hash collision, i.e., $PK'\neq~PK$ while $H(PK')=H(PK)$. The latter is assumed 
to occur with negligible probability.

This does not yet conclude our security discussion. As noted in \cite{gasti2012ddos}, content poisoning 
attacks can originate with malicious routers. What happens if a malicious router $R'$ feeds
poisoned content $C'$ to its non-malicious next hop neighbor $R$, towards some consumer(s)?
Since $R$ is honest and implements IKB, before forwarding and (optionally) caching
$C'$, it verifies, as before, that the signature of $C'$ is successfully verifiable using $PK$
that matches the hash in the corresponding PIT entry, i.e., the value of the \texttt{PPKD} field
of the original interest $\mathit{Int}$ that triggered creation of this PIT entry.

A more detailed security argument is provided in Appendix B.

\section{Optimizations}\label{sec:optimizations}

As mentioned before, IKB rule implies that routers should perform only one signature 
verification using the public key provided (by the producer) in the content and specified 
(by the consumer) using the \texttt{PPKD} field in the interest. 
Instead of including the public key in the content, it could be directly included by 
the consumer in the interest. This would require storing the public key alongside the interest 
in the PIT entry, to be used later for signature verification of the content. Since it 
is fair to assume that cache entries have longer lifetime than PIT entries, this approach 
can be beneficial in terms of storage. Its main drawback, however, is that the current 
interest format would need to be modified to include public keys.

For backbone routers that process and forward tens of gigabits per second, performing 
even a single signature verification per packet imposes a huge overhead. One approach 
to overcome this problem is to take advantage of the network structure. The current 
Internet is divided into Autonomous Systems (AS-s), each representing an administrative 
entity. In this architecture, only border routers of consumer-facing AS-s might implement 
the IKB rule by verifying signatures of all received contents. Alternatively, each router 
in an AS might probabilistically verify signatures on a subset of packets it forwards. The 
drawback of these approaches are that fake content could still be cached by routers that 
did not verify its signature. However, either method would have good chance of detecting 
and discarding most fake content before reaching to the consumer.

Another way to reduce signature verification overhead is to use SCNs 
\cite{ghodsi2011information,fu2000fast,mazieres1999separating,gasti2012ddos,baugher2012self}. 
According to \cite{wilcox2003names}, a content name can only have at most two out of the 
following three properties: security, uniqueness and human-readability. As suggested in \cite{ghodsi2011naming, 
ghodsi2011information}, SCNs can be formed by appending to the producer's public key digest a 
label that uniquely identifies the content. While this approach guarantees security and uniqueness, it 
lacks the means of verifying the binding between the content and its name~\cite{smetters2009securing}. 
To overcome this issue in NDN, we consider forming an SCN by specifying
the hash of requested content as the last component of the content name in the 
interest~\cite{jacobson2013named}. This provides name-content as well as producer-name, 
producer-key and name-key security bindings (as in Section~\ref{sec:ikb}). Although this use of SCNs 
does not yield fully human-readable names, it provides uniqueness and security 
properties~\cite{ghodsi2011naming}.

If a benign NDN consumer uses SCNs, the network guarantees (due to longest-prefix matching) 
delivery of ``valid'' content. The main advantage is that routers no longer 
need to verify signatures. Instead, they only recompute a content hash and check that 
it matches the one in the corresponding PIT entry. The remaining question is: how can a 
consumer obtain the hash of a content beforehand?

For the type of communication where most content is requested using SCNs, we advocate the 
use of so-called {\em catalogs}. A catalog is basically an 
authenticated data structure that includes a set of SCNs. This 
set can consists of references to content objects containing data, public keys, or  
other catalogs. The structure of catalogs can be application-specific and might vary 
from a simple list of SCNs, to multiple SCN sets forming a Merkle tree~\cite{merkle1982method}
or some similar data structure. To securely fetch an initial catalog, 
a consumer can fall back to using the \texttt{PPKD} interest field, as 
discussed earlier.

One obvious corollary of using SCNs in interest messages is that consumers 
and routers are no longer required to verify content signatures, as long as the 
SCN is trusted, i.e., obtained from a (consumer-verified) catalog. This reduces: (1) 
overhead of publishing, since producers now sign catalogs rather than individual content, and (2) 
network overhead, since there is no need to add the public key to the \texttt{KeyLocator} 
field of the content, as discussed in Section~\ref{sec:ikb}. The only time a signature  
is required is whenever a content is requested via \texttt{PPKD} 
interest field. In that case, both routers (prior to serving content from cache or forwarding it) 
and consumers (prior to accepting) {\em must} perform content signature verification. We believe
that is should be left up to the producer to decide whether a content should be requested by specifying its 
corresponding public key, SCN, or both.

Using SCNs in conjunction with catalogs brings up the issue of unsigned content objects. In other words,
a content $C$ which is indirectly signed as part of a catalog, can be fetched by its SCN, i.e., name-hash
combination. This does not rule out $C$ being separately signed by its producer. However, signing
a catalog-ed content increases overhead for the producer and increases content size. A sensible approach
is not to sign catalog-ed content  objects at all. This would imply that such objects can {\em only} be 
fetched via SCN. However, NDN architecture requires each content to be {\em individually} verifiable. 
Thus, existence of unsigned objects conflicts with a basic tenet of NDN \footnote{This is not the case for the 
latest version of CCNx, the original architecture that spawned NDN. CCNx 1.0 adopts secure catalogs 
(called manifests) and its packet format supports unsigned content objects\cite{ccnx}.}.

\section{Proposed Model in Practice}
\label{sec:trust_practice}
NDN was designed as a candidate next-generation Internet architecture.
In order to provide a smooth and successful transition path, NDN must 
contend with application-specific requirements, such as trust. In this section we discuss 
how the aforementioned trust model and its optimization could be applied in practice. 
We start by identifying different traffic types.

\subsection{Content Distribution}
This type of traffic corresponds to client-server communication in and accounts for well over 90\% of 
current Internet traffic\cite{GIP}. Since most requested content 
is static, creating secure catalogs is straightforward. Consumers request 
catalogs and then use included SCNs to request desired content.
We consider two common examples of content distribution traffic:

\vspace{.2cm}\noindent{\underline{Audio/Video Streaming:}} 
A typical audio/video is a large content split into several 
segments with different names (as mentioned in Section~\ref{sec:ndn}). If a catalog 
containing the SCNs of all the segments can be provided, consumers 
can use these names in subsequent interests to retrieve all segments of the content.

\vspace{.2cm}\noindent{\underline {Internet Browsing:}} 
We anticipate that most HTML files would fit into a single 
content object ~\cite{ramachandran2010web, everts2013the}. A typical HTML file contains reference 
links to other static and dynamic content, such as images, audio or other HTML pages 
(sub-pages). While rendering HTML files, Internet browsers parse all reference links 
and download corresponding content. Therefore, if an HTML file uses SCNs as references, 
it can be viewed and treated as a secure catalog. Of course,
SCNs can only be used for static content, since the hash of 
dynamic (e.g., generated upon request) content cannot be known {\em a priori}.

Internet browsing provides a good example of content that can be requested via either
\texttt{PPKD} or SCNs. Suppose that
a web page $A$ contains a reference link to sub-page $B$ and this link is expressed using 
an SCN. Once a consumer requests and obtains page $A$, the client browser
can request $B$ using the appropriate SCN in $A$. Whereas, other consumers 
might wish to directly request page $B$ (not as part of $A$) using its PPKD.
Note that, for obvious reasons, SCNs can not be used with HTML pages (or any other content)
with circular references, e.g., $A$ $\leftrightarrow$ $B$.

\subsection{Interactive Traffic}
\balance
Another major traffic type corresponds to interactive communication, where content is generated 
on demand. Applications such as voice/video conferencing, remote terminal access  
and on-line gaming fall into this category.  Such applications generally benefit from network caching 
only in cases of packet loss, since re-issued interests for lost packets are likely to be 
satisfied by the first hop NDN router.  
Obviously, the use of large catalogs for interactive real-time traffic is neither sensible nor feasible. 
Instead, consumers should request content by using 
\texttt{PPKD} in interests, in conjunction with producers perhaps 
offering small dynamically-generated catalogs, if short delays can be tolerated.

\section{Related Work}
Some prior research efforts discussed naming in content-oriented networks and 
its relationship to security. Notably, \cite{ghodsi2011naming} proposes establishing bindings 
between three ICN entities : (1) real-world identity 
coupled with the the producer of each content object, (2) name, and (3) public 
key used to verify the object signature. Only two of the three possible bindings (real-world
identity--name, name--key and real-world identify--key) are required, while the third can be 
transitively inherited. However, it is unclear how these bindings can be practically 
applied in the specific NDN settings.

Self-certifying naming schemes are discussed in~\cite{ghodsi2011naming, ghodsi2011information, 
fayazbakhsh2013less, koponen2007data}. Names are of the form $P:L$ where $P$ is 
the digest (hash) of the producer's public key, and $L$ is a label set by the producer. It is the 
latter's responsibility to make sure that names of this form are unique. This 
guarantees the name--key binding and trades off  human readability of names for 
strong security properties. Although, NDN use human-readable names, name-key binding 
is achievable by adding the \texttt{PPKD} field to interest messages. This allows 
interest forwarding based on longest-prefix matching on names. Whereas, using the $P:L$ scheme 
in NDN would result in tremendously large routing tables. We again recall that self-certifying 
names in NDN~\cite{jacobson2013named} are composed by adding the hash 
of the content as a name suffix (last component).

Prior work on Denial of Service (DoS) attacks on NDN 
includes \cite{compagno2013poseidon} and \cite{afanasyev2013interest}. Both results addressed 
a specific DoS attack type -- Interest Flooding -- based on inundating routers with spurious interest 
messages. Content poisoning was identified  in~\cite{gasti2012ddos}, which also sketched out 
some tentative countermeasures. Subsequently, \cite{ghali2014needle} proposed the first concrete
(however, only probabilistic) countermeasure based on analyzing exclusion patterns for cached content.

Trust and trust management systems are well studied in the 
literature, especially, in distributed environments, such as MANETs, ad hoc and wireless sensor 
networks (WSNs). \cite{cho2011survey} surveys the state of the art in trust 
management systems for MANETs. It emphasizes the need to combine the notions 
of ``social trust'' with ``quality-of-service (QoS) trust''. A similar 
survey can be found in~\cite{omar2012certification}.
\cite{lopez2010trust} presents an extensive review of trust management systems in WSNs. 
Based on unique features of WSNs, trust management system's best practices are derived and
state of the art countermeasures are evaluated against them.  
\cite{zahariadis2010trust} discusses security challenges in designing WSNs. It
distinguishes between the definitions of trust and security, and shows that cryptography is not 
always the solution for trust management. Instead, techniques from other domains should be 
included in defining and formalizing trust.

Since a single trust metric might not suffice to express trustworthiness of nodes, 
a multi-dimensional trust management framework is suggested in~\cite{li2010coping}. 
Three metrics are used: (1) node collaboration to perform tasks, such as packet 
forwarding, (2) node behavior, e.g., flagging nodes that flood the network, and (3) 
correctness of node-disseminated information, e.g., routing updates.

\cite{conner2009trust} proposes a framework for calculating a network entity's reputation 
score based on previous interactions feedback. In this framework, each service can
apply its own reputation scoring functions. It also supports caching of trust 
evaluation to reduce network overhead, and provides an API for reporting feedback 
and calculating reputation scores.

Policymaker \cite{blaze1996decentralized} is a tool that provides privacy and 
authenticity for network services. It offers a flexible and unified language for expressing 
policies and relationships. It also includes a local (per site or network) engine for carrying all trust 
operations, such as granting access to services.

All aforementioned techniques involve keeping track of other nodes' behavior in order to decide whether 
they are trusted. However, this general strategy is a poor match for NDN, since routers need an efficient 
mechanism to trust content, and not other entities. Because content can be served from anywhere
it is impractical for routers to trust other entities.

\section{Conclusion}
As argued in this paper, the NDN architecture is inherently susceptible to content poisoning 
attacks. To mitigate these attacks, we postulated some intuitive trust management goals needed to 
support content validation in NDN routers. We then presented 
simple rules that allow all NDN entities to validate content. These 
rules are compatible with the tenets of the NDN architecture. We also suggested  
several optimization techniques.

\subsection*{Acknowledgments}
This work was funded by NSF under the ``CNS-1040802: FIA: Collaborative Research: 
Named Data Networking (NDN)'' project. The authors thank Paolo Gasti, 
Christopher Wood, CCR reviewers and several anonymous commenters for their feedback 
on earlier drafts of this paper.

\bibliographystyle{abbrv}
\bibliography{IEEEabrv,references}
\normalsize


\begin{figure*}[t]
\centering
\subfigure[DFN topology - each edge router above is connected to 5 NDN
consumers]
{
	\fbox{\includegraphics[height=2.3in,width=0.86\columnwidth]{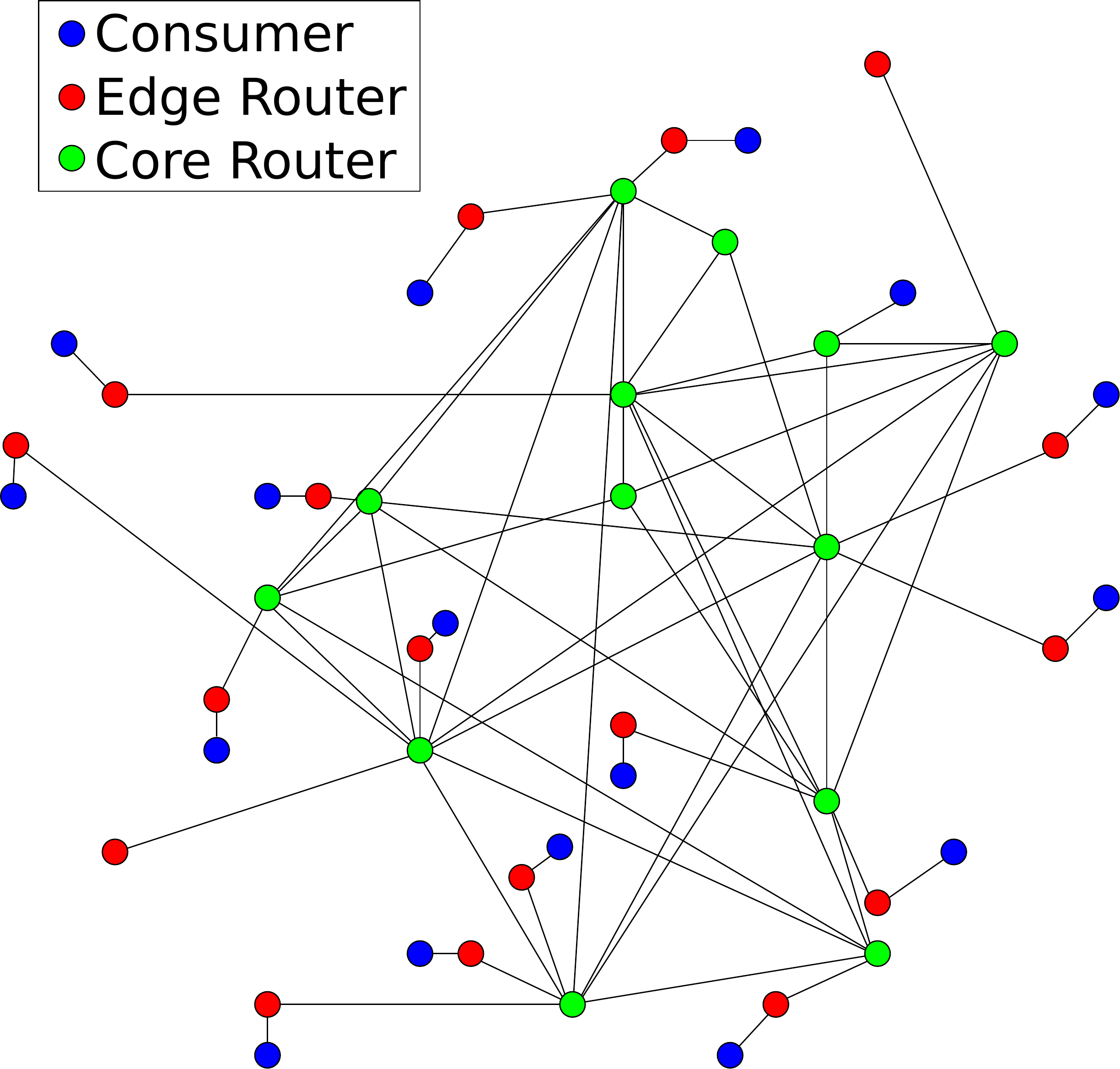}}
	\label{fig:dfn_topology}
}
\hspace{0.35cm}
\subfigure[DFN topology results with different rates of pre-populated fake content
objects (FCP: percentage of pre-populated fake content objects)]
{
	\includegraphics[height=2.3in,width=\columnwidth]{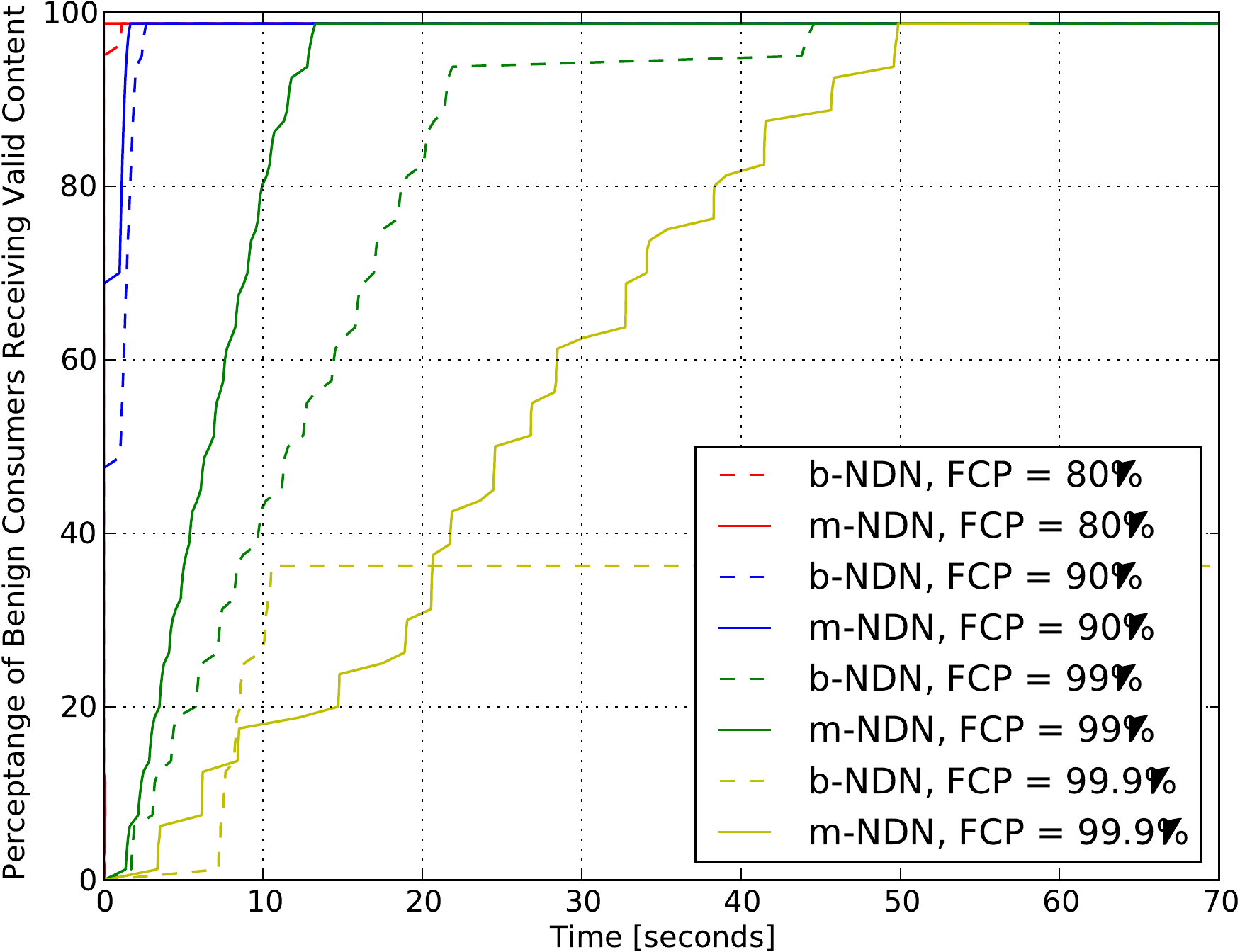} 
	\label{fig:dfn_results}
}
\caption{Content Poisoning Attack}
\label{fig:content_poisoning_attack}
\end{figure*}

\section*{APPENDIX A}
We first describe the adversary model and how fake content can be injected. We consider a 
proactive content poisoning attack whereby \Adv\ anticipates interests for content $C$ 
with name $n$ and injects fake content with the same name into router caches. Fake content 
can be injected into the network via malicious  routers or end-nodes.  For example, consider 
an \Adv\ (consisting of malicious consumer $Cr_m$ and a malicious producer $P_m$) targeting 
a specific victim router $R_v$. Assuming that $Cr_m$ and $P_m$ are connected to different interfaces 
of $R_v$, $Cr_m$ sends an interest for $n$. Once this interest is received by $R_v$ and an 
entry is added to the PIT, $P_m$ sends a fake content to $R_v$ which is promptly cached. 
Consequently, $R_v$ is pre-polluted with fake content, ready for arrival of genuine interests. 
To maximize longevity of the attack, $P_m$ sets the freshness field of fake content 
to a maximum value.

We simulate the DFN topology, 
Deutsches ForschungsNetz (German Research Network)~\cite{DFNverein, DFN-NOC} -- a 
network developed for research and education purposes. It consists of several connected 
routers positioned in different areas of the country, as shown in Figure~\ref{fig:dfn_topology}.
Our experiment measures how many benign consumers can retrieve a satisfactory (genuine) 
content and how fast they can do so when the router caches are poisoned. The simulation 
starts with core router caches pre-populated with various fake versions of the target 
content, $80\%$ (1 valid and 4 fake content objects), $90\%$ (1 valid 9 fake objects),
$99\%$ (1 valid 99 fake objects), and $99.9\%$ (1 valid 999 fake objects). To show the 
effect of having multiple consumers connected to the same router, we configure edge router 
to run without cache. Figure~\ref{fig:dfn_results} shows the results of this experiment. 
We can notice that it takes more than 20 seconds for $90\%$ of the consumers to retrieve 
valid content in the case pre-populated fake content objects rate of $99\%$. Moreover, more than $60\%$
of the consumers do not receive valid content during the time of the simulation, when the pre-population
rate is $99.9\%$.

\section*{APPENDIX B}
\begin{defn}
A hash function $\mathcal{H}$ is {\em second pre-image resistant}, if for any 
given $x$, no probabilistic polynomial-time (PPT) adversary $\mathcal{A}$ can find a value $x' \neq x$ such 
that $\mathcal{H}(x) = \mathcal{H}(x')$. In other words, 
$\Pr \left[ \mathcal{H}(x) = \mathcal{H}(x') \right] \leq \epsilon(n)$, where $\epsilon(n)$ 
is negligible and $n$ is the security parameter.  
A formal definition of probabilistic polynomial-time adversaries and negligible functions 
can be found in \cite{katz2008introduction}
\end{defn}

\begin{defn}
A signature scheme $\Pi$ is unforgeable if for any message $m$, no PPT
adversary $\mathcal{A}$ (given a public key $PK$) can generate a valid signature without 
knowing the corresponding private key. We denote the success of $\mathcal{A}$ as 
$\mathcal{A}^{\textrm{\bf forge}}(m) = 1$, i.e., if $\Pi$ is unforgeable, there exists 
a negligible function $\epsilon(n)$ such that:
$\Pr \left[ \mathcal{A}^{\textrm{\bf forge}}(m) = 1 \right] \leq \epsilon(n)$.
\end{defn}

\begin{defn}
For any interest message $\mathit{Int}$ with $\mathcal{H}(PK)$ (the digest of the verifying 
public key for the corresponding content) assigned to the \texttt{PPKD} 
field, and for any $\mathcal{A}$, the NDN cache poisoning 
experiment is defined as follows:

Given $\mathit{Int}$ as input to $\mathcal{A}$, it outputs a content object $C'$ containing: 
(1) a public key $PK'$ in the \texttt{KeyLocator} field, (2) a digest of this key $\mathcal{H}(PK')$ 
in \texttt{PPKD}, and (3) a signature $\sigma'$ in the \texttt{Signature} field. 
The output of this experiment is defined to be $1$ if one of the following holds:
\begin{compactitem}
\item $PK \neq PK'$ and $\mathcal{H}(PK) = \mathcal{H}(PK')$,
\item or, $PK = PK'$ and $\sigma$ is valid.
\end{compactitem}
In other words, $\mathcal{A}$ can either violate the second pre-image resistance of $\mathcal{H}$
(we denote this event as {\bf collision} which occurs with some probability $p_c$ and succeeds with 
probability $\Pr \left[ \mathcal{H}(x) = \mathcal{H}(x') \right]$), or forge the signature (we denote this event 
as {\bf forge} which occurs with some probability $p_f$ and succeeds with 
$\Pr \left[ \mathcal{A}^{\textrm{\bf forge}}(m) = 1 \right]$). We denote the success of $\mathcal{A}$ 
as $\mathcal{A}^{\textrm{\bf pois}}(\mathit{Int}) = 1$.
\end{defn}

\begin{theorem}
Given $\mathcal{H}$, $\Pi$ (as defined above), $\mathcal{A}$ succeeds in injecting a fake content object 
$C'$ into a network that abides by the IKB rule with a negligible probability $\epsilon(n)$.
\begin{align*}
\Pr \left[ \mathcal{A}^{\textrm{\bf pois}} \left( \mathit{Int} \right) = 1 \right] \leq \epsilon(n)
\end{align*}
\end{theorem}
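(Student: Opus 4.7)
The plan is to reduce the theorem to the two cryptographic assumptions already in place, namely second pre-image resistance of $\mathcal{H}$ and unforgeability of $\Pi$, via a simple case split corresponding to the two winning conditions in the experiment. I would begin by observing that the event $\mathcal{A}^{\textrm{\bf pois}}(\mathit{Int}) = 1$ decomposes as the disjoint union of the two events named in the definition: the \textbf{collision} event ($PK \neq PK'$ with $\mathcal{H}(PK) = \mathcal{H}(PK')$) and the \textbf{forge} event ($PK = PK'$ with a valid signature $\sigma'$). These events are mutually exclusive because they are separated by the condition $PK = PK'$ versus $PK \neq PK'$, so a union bound (in fact, equality) gives
\begin{align*}
\Pr\!\left[\mathcal{A}^{\textrm{\bf pois}}(\mathit{Int}) = 1\right] \;=\; \Pr[\textbf{collision}] + \Pr[\textbf{forge}].
\end{align*}

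Next I would bound each summand separately by constructing a straightforward reduction from an adversary who wins the poisoning experiment to an adversary against the underlying primitive. For the first summand, given $\mathcal{A}$ whose output causes \textbf{collision}, the reduction receives $PK$ as its second pre-image challenge, embeds $\mathcal{H}(PK)$ into the \texttt{PPKD} field of $\mathit{Int}$, runs $\mathcal{A}$, and outputs the $PK'$ extracted from the \texttt{KeyLocator} field of $C'$; by construction this $PK'$ satisfies $PK' \neq PK$ and $\mathcal{H}(PK') = \mathcal{H}(PK)$, so $\Pr[\textbf{collision}] \leq \epsilon_\mathcal{H}(n)$ for a negligible function $\epsilon_\mathcal{H}$. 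For the second summand, given $\mathcal{A}$ whose output causes \textbf{forge}, the reduction receives the challenge public key $PK$ from the unforgeability game, embeds $\mathcal{H}(PK)$ into $\mathit{Int}$, runs $\mathcal{A}$, and outputs $(C', \sigma')$; since $PK' = PK$ in this branch and $\sigma'$ is a valid signature on the content, this is a valid forgery, giving $\Pr[\textbf{forge}] \leq \epsilon_\Pi(n)$.

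Combining the two bounds yields $\Pr[\mathcal{A}^{\textrm{\bf pois}}(\mathit{Int}) = 1] \leq \epsilon_\mathcal{H}(n) + \epsilon_\Pi(n)$, and the sum of two negligible functions is itself negligible, so the desired $\epsilon(n)$ can be taken as this sum. I would then close by tying the argument back to the operational claim from Section~\ref{sec:ikb}: because the honest router only accepts $C'$ after checking that $\mathcal{H}(PK') = \mathcal{H}(PK)$ (the \texttt{PPKD} in the PIT entry) and that $\sigma'$ verifies under $PK'$, the only routes to acceptance of fake content by the router are exactly the two events above, so the bound translates directly into the guarantee that IKB-compliant routers accept poisoned content with only negligible probability.

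The main obstacle, such as it is, lies in being precise about the reduction to second pre-image resistance: one must argue that the $PK$ appearing in the PIT entry is indeed the ``challenge'' value and is independent of $\mathcal{A}$'s view beyond its digest, so that $\mathcal{A}$ cannot trivially output $PK' = PK$ and still produce a valid signature without holding the corresponding private key (this is precisely what keeps the two cases cleanly separated and prevents a degenerate adversary from straddling both). Everything else is routine once this separation is justified.
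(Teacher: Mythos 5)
Your proposal is correct and matches the paper's argument in substance: the same case split on whether $PK = PK'$ reduces the poisoning experiment to second pre-image resistance of $\mathcal{H}$ (the \textbf{collision} branch) and unforgeability of $\Pi$ (the \textbf{forge} branch). The only difference is presentational---you argue directly via a union bound over the two disjoint events and sum two negligible terms, whereas the paper phrases the same reduction as a proof by contradiction with a single combined adversary $\mathcal{A}'$ that outputs either a second pre-image or a forgery depending on which case occurs---and your direct version is, if anything, the cleaner of the two.
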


\begin{proof}
We show the above by contradiction:

Assume that $\mathcal{A}$ succeeds in injecting $C'$ with a non\hyp{}negligible 
probability. Then, we can construct a reduction $\mathcal{A}'$ (another PPT
adversary), that uses $\mathcal{A}$ to break second pre-image resistance of $\mathcal{H}$, or  
unforgeability of $\Pi$:

\vspace{0.2cm}
\centerline{\fbox{\parbox{0.968\columnwidth}{
\textbf{Adversary $\mathcal{A}'$}
\begin{compactenum}
\item Is given a hash value $x$.
\item Creates an interest message $\mathit{Int}$ and sets $\mathcal{H}(x)$ as its 
\texttt{PPKD} field value.
\item Runs $\mathcal{A}(\mathit{Int})$ to obtain $C'$.
\item Extracts from $C'$ and outputs:
\begin{compactenum}
\item $PK'$ as a collision with $x$, if $x \neq PK'$,
\item or $\sigma'$ as a forged signature for $C'$, if $x = PK'$.
\end{compactenum}
\end{compactenum}
}}}

We now determine the probability of success of $\mathcal{A}'$. Whenever either {\bf collision} or 
{\bf forge} event occurs $\mathcal{A'}$ succeeds. Therefore,
\begin{align*}
\label{equ:security_proof_1}
\Pr \left[ \mathcal{A'}~\mathrm{succeeds} \right] &= \Pr \left[ {\bf collision} \cup {\bf forge} \right] \nonumber\\
&= p_c \cdot \Pr \left[ \mathcal{H}(x) = \mathcal{H}(PK') \right] \nonumber\\
&+ p_f \cdot \Pr \left[ \mathcal{A'}^{\mathrm{\bf forge}}(C') = 1 \right] \nonumber\\
&> \epsilon(n)
\end{align*}
The last inequality holds because $\mathcal{A}'$ succeeds with the same probability as $\mathcal{A}$, 
which is non-negligible. If the result of adding two functions is non-negligible, at least one of them must be 
non-negligible~\cite{katz2008introduction}. Moreover, since both $p_c$ and $p_f$ cannot be 
exponential functions, then either\\
$\Pr \left[ \mathcal{H}(x) = \mathcal{H}(PK') \right] > \epsilon(n)$ or 
$\Pr \left[ \mathcal{A'}^{\mathrm{\bf forge}}(C') = 1 \right] > \epsilon(n)$.

\end{proof}

\end{document}